\newcommand{\ignore}[1]{}
\renewcommand{\paragraph}[1]{\medskip\noindent\textbf{#1}}
\newtheorem{theorem}{Theorem}
\newcommand{\Rbalance}{{\em Rbalance}\xspace}
\newcommand{\NRbalance}{{\em NRbalance}\xspace}
\newcommand{\deq}{\mathrel{\mathop:}=}
\newlist{pseudocode}{enumerate}{2}
\setlist[pseudocode]{label=\raisebox{0.4px}{\scriptsize{\sf \arabic*:}}, ref=\arabic*, labelwidth=1em, labelsep=1em, align=right, itemsep=.2ex,topsep=0.5ex, leftmargin=1.3em}
\title{ERC-20R and ERC-721R: Reversible Transactions on Ethereum}
  \author{
    Kaili Wang \\ \textsf{\small kkwang@cs.stanford.edu}
  \and 
    Qinchen Wang \\ \textsf{\small qinchenw@cs.stanford.edu}
  \and 
    Dan Boneh \\ \textsf{\small dabo@cs.stanford.edu}
}}
  \date{}
  \author{}
\begin{document}

\maketitle

\begin{abstract} 
Blockchains are meant to be persistent: posted transactions are immutable and cannot be changed.
As a result, when a theft takes place, 
there are limited options for reversing the disputed transaction, and this has led to significant losses in the blockchain ecosystem. 

In this paper we propose reversible versions of ERC-20 and ERC-721, the most widely used token standards. With these new standards, a transaction is eligible for reversal for a short period of time after it has been posted on chain. 
After the dispute period has elapsed, the transaction can no longer be reversed. 
Within the short dispute period, a sender can request to reverse a transaction by convincing
a decentralized set of judges to first freeze the disputed assets, 
and then later convincing them to reverse the transaction. 

Supporting reversibility in the context of ERC-20 and ERC-721 raises many interesting technical challenges. 
This paper explores these challenges and proposes a design
for our ERC-20R and ERC-721R standards, 
the reversible versions of ERC-20 and ERC-721.
We also provide a
\iftoggle{fullversion}
  {\href{https://github.com/kkailiwang/erc20r}{prototype implementation}.} 
  {\href{https://github.com}{prototype implementation}.} 
Our goal is to initiate a deeper conversation about reversibility in the hope of reducing some of the
losses in the blockchain ecosystem.
\end{abstract}

\section{Introduction} \label{sec:introduction}
\vspace{-2ex}

Since their inception, cryptocurrencies have been plagued by thefts and accidental losses \cite{zandt_2022}.
Victims include 
end users~\cite{uniswapLP}, 
DAOs~\cite{daoHack}, 
bridges~\cite{RoninHack,harmony,polyHack,DeFiHacks}, 
and exchanges~\cite{ExHacks,bitgrailHack,coincheckHack,kucoinHack,goxHack}. 
Usually, the stolen assets are first transferred from the victim's address
to an address controlled by the attacker.  
From there the assets are laundered by transferring them to other addresses and eventually to an offramp.
In a few cases, the assets are seized at the offramp~\cite{Bitfinex}.

The annual losses can be quite high.  
In 2020, \$7.8 billion was stolen, and in 2021 that amount doubled to \$14 billion~\cite{chainalysis_2022,stasha_2022}. 
A recent well publicized attack on the Ronin bridge resulted in a theft of over \$600 million~\cite{RoninHack}. 
The attacker transferred ETH and USDC from the Ronin contract on Ethereum to an address controlled by the attacker.
From there, some of the funds were moved to the Tornado mixer~\cite{RoninTornado}.
Other bridges have experienced similar thefts~\cite{harmony,polyHack,DeFiHacks}.
In many of these attacks, the assets stolen were held in ERC-20 contracts.

Similarly, NFTs held in ERC-721 and ERC-1155 contracts have seen an uptick in thefts.
In a twelve months period following July 2021, an estimated \$100 million were stolen in NFTs
using phishing and other attacks~\cite{elliptic_2022,chainalysis_2022}. 

These attacks were often discovered soon after the theft took place.
Had there been a way to reverse the offending transaction(s) --
as in traditional finance --
the damage could have been greatly reduced. 

Beyond theft, transaction finality has also worked against us
when funds are accidentally sent to a wrong address. 
In May 2022, a Cosmos-based blockchain called JUNO passed a proposal to move \$36 million USD to a specific address.
The address contained a typo~\cite{typo} and consequently the funds were lost. 
The funds could have been recovered had there been a way to reverse that transaction. 

\paragraph{Reversible transactions.}
In a \href{https://twitter.com/vitalikbuterin/status/987262267036184577}{2018 tweet}, Vitalik Buterin wrote that 
\begin{quote}
Someone should come along and issue an ERC20 called ``Reversible Ether'' that is 1:1 backed by ether but has a DAO that can revert transfers within $N$ days.
\end{quote}
Nowadays, this can be applied to any ERC-20 token (not just wrapped ETH), as well as to NFTs.

Enabling reversible transactions is not easy and introduces many fascinating technical challenges.
The main contribution of this paper is to explore how to support reversibility within
the ERC-20 and ERC-721 framework.
We propose two new standards that allow transaction reversal within a limited time window,
say four days.
We call these standards ERC-20R and ERC-721R, the reversible versions of ERC-20 and ERC-721, respectively.

\medskip
We envision the following high-level workflow for reversing a posted transaction (see Figure~\ref{fig:api1}):
\begin{itemize}[itemsep=.5ex,topsep=.5ex,labelwidth=1em,leftmargin=1em]
\item {\em Request freeze.} The victim posts a freeze request to a governance contract, 
along with the relevant evidence, and some stake. 
A request to freeze a transaction can only be initiated by an address
that is directly affected by the transaction. 

\item {\em Freeze assets.} A decentralized set of judges decides to accept or reject the request.
If accepted, the judges instruct an on-chain governance contract to call the {\em freeze} function on the impacted ERC-20R or ERC-721R contract.
Subsequently, the assets in question are frozen and can no longer be transferred.
For NFTs, this is a simple matter of freezing the disputed NFT.
For ERC-20 tokens this is more complicated, as we explain below.
We discuss the operation of the governance contract,
and the selection of judges, in Section~\ref{sec:governance}. 
We envision the freeze process being relatively quick, 
taking the judges at most one or two days to make a decision.

\item {\em Trial.}  Both sides can then present evidence
to the decentralized set of judges.
Eventually the judges reach a decision, at which point
they instruct the governance contract to call 
either the {\em reverse} or {\em rejectReverse} functions 
on the impacted ERC-20R or ERC-721R contract. 
The {\em reverse} function transfers the disputed (frozen) assets to their
original owner.
The {\em rejectReverse} releases the freeze on the disputed assets
and leaves them where they are.
The trial may be lengthy, possibly taking several weeks or months. 
\end{itemize}

\begin{figure}
\begin{center}
  \includegraphics[width=0.9\linewidth]{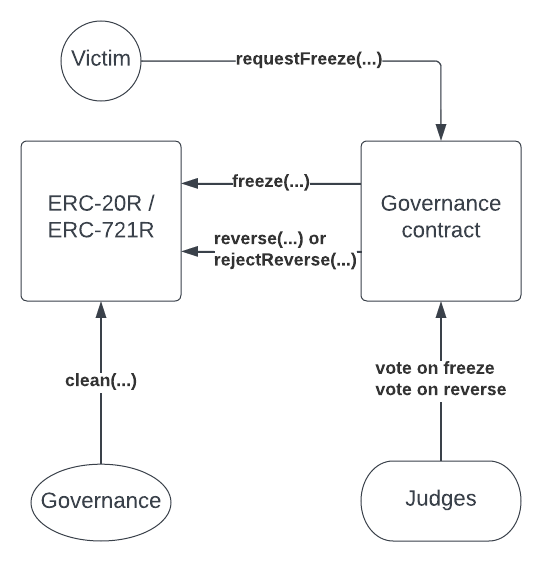}
  \caption{The process for reversing a transaction.}
  \label{fig:api1}
\end{center}
\end{figure}

\noindent
A more complete workflow is presented in Figure~\ref{fig:parties_involved}.

\begin{figure}[t!]
\begin{center}
  \includegraphics[width=0.9\linewidth]{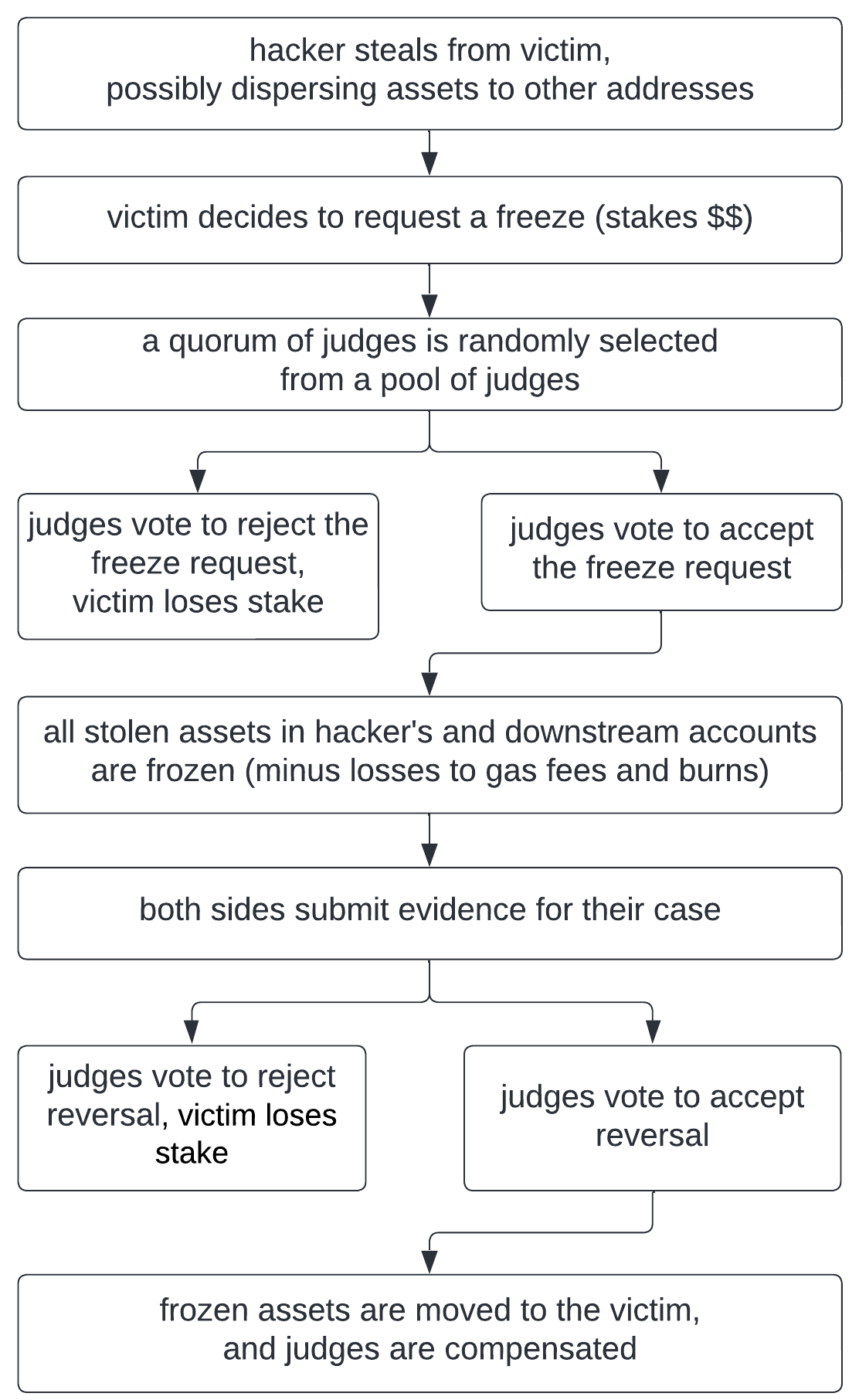}
  \caption{Overview of reversal process}
  \label{fig:parties_involved}
\end{center}
\end{figure}

\paragraph{Locating the stolen assets.}
By the time the victim submits a freeze request, 
the attacker may have already moved the stolen assets through multiple accounts.
In fact, the attacker can monitor the mempool, 
and move the assets as soon as it sees a request to freeze the stolen assets. 
In the case of an NFT, the attacker may have sold the stolen NFT to an unsuspecting honest user.
In the case of an ERC-20 token, 
the attacker may have divided the stolen tokens across multiple accounts;
it may have exchanged a portion of the tokens for another ERC-20 token using an honest on-chain exchange;
it may have burned a portion of the tokens;
or it may have sent the tokens to a mixer.
The new reversible standards must properly handle all these cases. 

In case of a dispute over an ERC-721 NFT, the freeze is applied to the current holder of the NFT:
either the original attacker, or an honest user who purchased the stolen NFT from the attacker.
If the judges decide that a theft took place, 
then the ERC-721R contract sends the NFT back to the pre-theft owner.
The current owner of the NFT loses the NFT.
This policy is consistent with tort law in many countries, but of course, 
other policies can be implemented.

In case of a dispute over stolen ERC-20 tokens things are more complicated.
By the time the freeze is executed, the funds may have been dispersed across many downstream accounts,
some honest and some dishonest.
In Section~\ref{sec:methodology} we present an example algorithm that assigns fractional responsibility to
each of the downstream accounts that received a portion of the stolen funds.
The partial freeze is then applied to these accounts.
Implementing this freeze strategy requires the ERC-20 contract to maintain a transaction log
during the dispute window so that the {\em freeze} function can trace the funds when
it is called by the governance contract.  
If the judges decide that a theft took place, the ERC-20R contract moves the frozen
tokens from the obligated accounts to the pre-theft account.
We discuss this in more detail in Section~\ref{sec:methodology}.

\paragraph{Will reversibility introduce delays?}
Suppose Alice holds tokens in an ERC-20R (an R-token)
and she wishes to exchange them for ETH or for some 
other ERC-20 (a non-R-token).
Bob is willing to do the ETH-for-token exchange with Alice
and they agree on the exchange rate.
Clearly Bob will not release his ETH to Alice
until he is assured that the R-tokens that Alice sent him 
cannot be taken back due an upstream reversal request.
This means that Bob will only accept R-tokens that were transferred
to Alice at least four days ago, if the dispute window is four days.
We refer to such R-tokens as ``old'' tokens,
since they are no longer subject a freeze and reversal. 
If Alice has sufficiently many old R-tokens then the transaction can settle
right away.  
Otherwise, Bob will delay final settlement for four days,
after which he is assured that Alice's R-tokens cannot be taken away
due to an upstream reversal. 
In particular, if Alice steals some R-tokens from Victor (the victim)
and immediately tries to exchange them for Bob's non-R-token, 
that exchange will only settle four days later,
by which time Victor may have asked for the theft to be reversed. 
To reiterate, an exchange of a recently transferred R-token for a
non-R-token will be subject to delay. 

Interestingly, if Alice wants to exchange one reversible token
for another reversible token, 
then Bob can immediately settle that exchange.
No delay is needed.
If the transaction that sends Alice's tokens to Bob is later reversed,
then Bob can request to reverse the matching transaction
that sends his tokens to Alice.
Hence an R-token to R-token exchange can settle instantly,
but an R-token to a non-R-token exchange may need to be delayed until
the reversible tokens are sufficiently old.  

The consequence of this is that once a few key tokens become reversible, 
there is a strong incentive for other tokens to become reversible
to avoid settlement delays. 
In other words, {\bf\em reversibility is viral.}

\subsection{Implementation}

We provide a reference implementation of these new standards. 
Our Solidity implementation is split into two parts: 
(i) the main ERC-20R and ERC-721R contracts that keep track of all the balances and transactions, and 
(ii) a governance contract that selects judges and gathers votes.
Our ERC-20R and ERC-721R implementations are extensions of the OpenZeppelin non-reversible contracts.

Recall that an ERC-20 contract manages the balances 
of many accounts. 
In our ERC-20R contract, an account balance is a pair of numbers
we call \Rbalance and \NRbalance.  
\begin{itemize}[topsep=1ex,itemsep=1ex,parsep=1ex]
\item 
The {\bf\NRbalance} is the current account balance due to incoming 
transactions whose dispute window has elapsed.
Funds in the \NRbalance are non-reversible: 
they are no longer subject to a potential freeze and reversal. 

\item 
The {\bf\Rbalance} is the account balance due to recent incoming transactions.
Funds in the \Rbalance are subject to reversal. 
\end{itemize}
As the dispute window elapses, funds move from the \Rbalance
to the \NRbalance.
This is done using the {\em clean} function discussed below. 

When an account owner sends funds from its own account to another account
(say, to fulfill an exchange of assets),
the account owner specifies how much to take
out of the \Rbalance and how much to take out of the \NRbalance.  
That is, the standard ERC-20 {\em transfer} function now comes in two flavors:
{\em transfer()} and {\em Rtransfer()}.
The former transfers from the non-reversible balance,
while the latter transfers from the reversible balance.
Either way, the transferred funds are added to the \Rbalance of the recipient. 
Note that our {\em transfer} function is backwards compatible with the ERC-20 specification. 
When the contract needs to burn tokens in an account 
(e.g., due to a fiat or base token withdrawal),
it will typically only burn tokens from the \NRbalance of the account.
However, there may be situations where the contract is willing 
to burn from the \Rbalance. 

\ignore{
An ERC-20R contract must enforce two rules:
\begin{itemize}[topsep=1ex,itemsep=1ex,parsep=1ex]
\item {\bf The transfer rule:}
transferred funds using either {\em transfer()} or {\em Rtransfer()}
are always added to the \Rbalance of the recipient. 

\item {\bf The burn rule:}
when burning a token (e.g., due to redemption for fiat),
the burn amount always comes out of the \NRbalance of the account.
\end{itemize}
The burn rule is due to the irreversibility of a burn.
}

\medskip\noindent
Beyond the new transfer interface,
an ERC-20R and ERC-721R contract exposes
the new interface functions 
{\em freeze}, {\em reverse}, {\em rejectReverse}, and {\em clean}.
Let us describe this API in more detail:
\begin{itemize}[itemsep=.5ex,topsep=0ex,parsep=.5ex,partopsep=0em,leftmargin=1em,labelwidth=1em]
    \item {\em freeze}(): calculates the amounts to freeze on the attacker's address as well as potential downstream addresses, and freezes those amounts. 
    For ERC-20R  
    it returns a $\textit{claimID}$
    that points to an on-chain list of (account,amount) pairs.
    The list identifies all the accounts that contain frozen assets associated with the complaint, and the amount frozen in each. 
    For ERC-721R it 
    returns a boolean success flag.
    The inner workings of the {\em freeze} function is explained in Section~\ref{sec:methodology}. 
    
    \item {\em reverse}(): sends all frozen assets associated with the claimed theft back to the original owner. 
    For ERC-20R, takes as argument a valid \textit{claimID}.
    For ERC-721R, takes in arguments $(\textit{tokenId}, \textit{index})$
    where $\textit{index}$ identifies the transaction being reversed.
    
    \item {\em rejectReverse}(): unfreezes all amounts associated with the claim. 
    For ERC-20R, takes as argument a \textit{claimID}; 
    for ERC-721R, takes as argument a \textit{tokenId}.
    
    \item {\em clean}(): Reversible contracts store some transaction data on chain.  
    The {\em clean} function removes on-chain information for transactions whose dispute window has elapsed. 
    The data structure that is being updated is explained in
    Section~\ref{sec:freeze}. 
    In addition, for ERC-20R this function moves the relevant balance 
    from the \Rbalance to the \NRbalance. 
\end{itemize}
The {\em freeze}, {\em reverse} and {\em rejectReverse} functions can only be called by the governance contract,
while anyone can call the {\em clean} function.

\subsection{Related Work} \label{sec:preliminaries}
 
In the fall of 2018, one of the co-creators of the ERC-20 standard proposed the concept of a {\em reversible ICO}, 
where investors would be able to get a refund amount inversely proportional to how recently they invested~\cite{rico}. 
Although this safeguards against a single token being a scam at launch, 
it does not protect against malicious transactions. 

Eigenmann~\cite{githubreverse} drafted a contract for a reversible token that extends the ERC-20 standard. 
It used an escrow method, where the escrow period was 30 days, 
during which the sender could recall the money at any time.
This is problematic because Bob could pay Alice for a service,
and then reverse the payment 28 days later, after Alice completed the service.
A similar approach is used in a proposal for refunds in ERC-721 mints~\cite{cryptofighters},
where Bob can get his money back 
within a certain time window after buying an NFT.
In our proposal, Bob would need to present sufficient evidence to a committee of judges for a transaction to be reversed. 
This protects counter-parties to the disputed transaction. 

The {\em Reversecoin} project from 2015 launched as a layer~1 blockchain~\cite{challa}. 
It introduced a timeout period between transaction initiation and confirmation. 
Each account has an offline key pair that enables the owner to either reverse a transaction or immediately confirm it. 
This may not prevent some modern hacks: the attacker would either steal
the confirmation key, or trick the user into using the confirmation key to confirm a malicious transaction. 
The elegant Bitcoin Convenants proposal~\cite{covenants} similarly uses 
two keys (or more) to enable a vault owner to finalize or revert transactions 
from the vault 24 hours after they were posted.

A closely related project is Lossless.io~\cite{lossless}.
The company provides a wrapped version of certain ERC-20 tokens.
Anyone with staked LSS tokens can monitor on-chain events for hacks, and can freeze an address if the address is involved in a hack. 
Only a single address can be frozen for a given hack, 
so one must act quickly after a hack to freeze the attacker's address
before the funds are further dispersed. 
The quickest spotter is rewarded in a winner-takes-all fashion. 
Transactions can also be frozen automatically by open-source mechanisms 
built by the Lossless team. 
The company then decides unilaterally whether the reversal is warranted,
and if so reverses the transfer. 
If a recipient burns the wrapped token it receives,
by converting them back to the base unwrapped token,
then reversal is no longer possible.
This cannot happen in our proposal since token burns come from an account's \NRbalance
(unless the contract willingly accepts the risk of burning \Rbalance tokens).

Finally, centralized exchanges maintain the ability to freeze and remove assets.
For example, {\em Binance USD} (BUSD), issued by Paxos, states that  (\href{https://www.binance.com/en/blog/futures/busd-all-you-need-to-know-about-the-stablecoin-421499824684903051}{link}):
\begin{quote}
Paxos also has the ability to create and burn BUSD tokens at will, as well as freeze and remove funds from people who exhibit nefarious or illicit activity.
\end{quote}
The same holds for other centralized tokens.
In these centralized systems, 
the operator acts as a centralized judge that can reverse transactions.

\section{The transaction reversal process} 
\label{sec:methodology}

In this section we describe the details of the ERC-20R and ERC-721R process of reverting a transaction.
We describe the data structures and algorithms needed
to track the funds that will be frozen and later sent back to the victim, if the victim prevails.

\subsection{Freezing assets}
\label{sec:freeze}

After a theft from an ERC-20R or an ERC-721R contract, 
the victim posts an on-chain freeze request to the governance contract.  
The request includes the offending transaction ID,
a link to evidence that an unauthorized transfer took place,
and some stake.
If the judges are convinced by the evidence (see Section~\ref{sec:governance})
then they instruct the governance contract to call
the {\em freeze} function on the relevant ERC-20R or ERC-721R contract.
Once the assets are frozen, any attempt to transfer them will fail.
If the judges are unconvinced, they instruct the governance contract to 
reject the request, and the victim loses the stake.

In this section we explain what happens when the
governance contract calls the {\em freeze} function on the relevant ERC-20R or ERC-721R contract.

\subsubsection{An ERC-721R freeze}

The {\em freeze} function on an ERC-721R contract is quite simple.
First, we add two structures to the contract:
{\small
\begin{verbatim}
 mapping(uint256 => bool )  _frozen;
 mapping(uint256 => Queue)  _owners;
\end{verbatim}
}
The {\tt \_frozen} structure indicates if a particular {\tt tokenID} is frozen.
If {\tt \_frozen[tokenID]} is true then the asset is frozen, and cannot be transferred.  

\newcommand{\TID}{\textit{tokenID}}
\newcommand{\OID}{\textit{owner}}
\newcommand{\BLK}{\textit{bn}}

The {\tt \_owners} structure keeps track of the recent list of owners for the asset identified by {\tt tokenID}.
On every transfer, a pair  {\it (newOwner,\ blockNumber)} 
is appended to the queue at {\tt \_owners[tokenID]} 
to record that at this block number, the new owner of {\tt tokenID} became {\it newOwner}.
This {\tt \_owners} structure looks as:
{\small
\begin{equation*}\label{eq:owners}
    \begin{cases}
      \TID_0 \to \bigl[(\OID_0,\BLK_0),(\OID_1,\BLK_1), \ldots \bigr] \\[1ex]
      \TID_1 \to \bigl[(\OID_0,\BLK_0),(\OID_1,\BLK_1), \ldots \bigr] \\[1ex]
      \TID_2 \to \bigl[(\OID_0,\BLK_0),(\OID_1,\BLK_1), \ldots \bigr] \\
      \quad \ldots
    \end{cases}
\end{equation*}
}%
where for each $\TID$ we have $\BLK_0 \leq \BLK_1 \leq \cdots$

\medskip\noindent
The governance contract freezes an asset by calling \\
\mbox{}\qquad  \texttt{freeze(tokenID, index)} \\
where {\tt index} points to an entry in the queue {\tt \_owners[tokenID]}.
The function first verifies that the disputed transfer took place within the dispute window
by using the block number at position {\tt index+1} in the queue {\tt \_owners[tokenID]}.
The function also ensures that the asset is not already frozen.
If the checks succeed, the function sets {\tt \_frozen[tokenID]} to true, 
indicating that the asset is frozen.
That's it.
Doing these checks on-chain ensures that even malicious judges cannot
reverse a transaction outside of the dispute window. 

\medskip
If at a later time the ERC-721R contract is asked to revert the disputed transfer
(that is, the governance contract calls the {\em reverse} function), 
the contract simply transfers {\tt tokenID}
to the address written at position {\tt index} in the queue {\tt \_owners[tokenID]}.
This was the owner before the disputed transaction. 

\medskip
Calling the {\em clean} function on this contract
with a list of token IDs
removes data from the {\tt \_owners} structure that is no longer needed.
This removes transfers that can no longer be reverted
because the dispute window has elapsed.
It is done to save on chain storage.
Note that a frozen tokenID cannot be cleaned until it becomes unfrozen;
this preserves the data needed to reverse a disputed transaction, if needed.

\subsubsection{An ERC-20R freeze}

The freeze function on an ERC-20 contract is much more complicated. 
The problem is that the tokens might have been transferred to multiple accounts between the time of the theft and the freeze request.

Figure~\ref{fig:theftGraph} shows a transaction from a victim address $v$ to an attacker address $a_0$.
Subsequent transfers from $a_0$ that took place after the disputed transaction, but prior to a freeze request, are indicated as directed edges in the graph.  All these downstream addresses may hold stolen funds that may need to be frozen.

\begin{figure}[t!]
  \includegraphics[width=\linewidth]{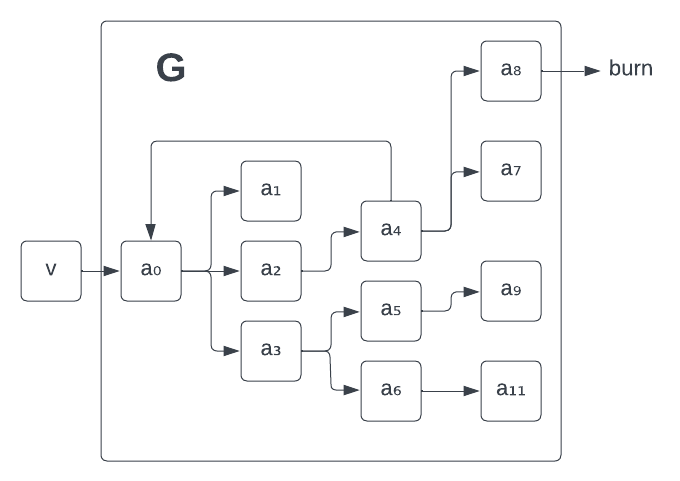}
  \caption{An example graph $G$ of transfers following the disputed transaction from $v$ to $a_0$.}
  \label{fig:theftGraph}
\end{figure}

Another complication is that tokens can be burned
and simply vanish from the system.
For example, in a stablecoin contract that is collateralized
by a fiat currency, the account $a_8$ may choose to redeem
its coins for fiat currency, at which point the coins at $a_8$
are burned.  
If the contract supports burning token from the \Rbalance of $a_8$
then there might not be sufficient assets to freeze
at address $a_8$, and the stolen funds are permanently lost. 
This is why burns generally apply to the \NRbalance of an account. 

Yet another complication is that a single address, 
say address $a_1$,
might own funds obtained from multiple thefts.  
When the funds are frozen, the system needs to remember
the amount frozen for each reversion request,
and if a reversion request is approved,
only the funds associated with that request should be
taken out of account $a_1$. 

If there are multiple freeze events on a single account,
the frozen amount is cumulative.
In particular, if account $a$ has a total of $s$ frozen coins,
then the ERC-20R contract will reject any transfer that
causes the \Rbalance of account $a$ to drop below $s$.

\paragraph{Calculating suspect addresses.}
Let us begin by describing a specific strategy for 
calculating the amount to freeze on each account
in the graph~$G$ from Figure~\ref{fig:theftGraph}.
Accounts that are subject to a freeze are called
{\em suspect} addresses.
Our freeze algorithm will chase the stolen funds across 
the transaction graph and freeze funds across suspected addresses.
The algorithm will freeze assets that are as close as 
possible to~$a_0$ in~$G$.

\medskip
Before we describe the detailed algorithm, 
let us first see a few examples.
Suppose that $s$ coins are stolen from the victim~$v$
and transferred to account $a_0$ at time $T_0$.
At time $T_f > T_0$ the ERC-20R contract receives a request
to freeze that transaction. 

\smallskip\noindent Example 1. 
If at time $T_f$ the \Rbalance at $a_0$ is $s$ or more,
then~$s$ coins will be frozen at $a_0$ and the process concludes.

\smallskip\noindent Example 2. 
Suppose that at some time between $T_0$ and $T_f$ a transaction transfers $s/4$ coins from $a_0$ to $a_1$
and an additional $s/4$ coins from $a_0$ to $a_2$. 
The remaining \Rbalance at $a_0$ is $s/2$ and no subsequent transactions apply to $a_0$.  
Then at the freeze time, the entire $s/2$ \Rbalance at $a_0$ will be frozen.
Moreover, if the \Rbalance at $a_1$ and $a_2$ is exactly $(s/4)$, 
then $(s/4)$ coins will be frozen at each of those accounts.
Now that a total of $s$ coins has been frozen, the process terminates.
Note that $a_1$ or $a_2$ might be the addresses of an honest exchange or a mixing service,
in which case a portion of the coin's liquidity pool at the exchange or mix 
will become frozen.

\smallskip\noindent Example 3. 
Suppose that following the disputed transaction from $v$ to $a_0$,
there is a second transaction $a_0 \to a_1$.
Clearly if there are insufficient funds at $a_0$ at the time of the freeze, 
then some of the freeze obligations should pass on to $a_1$, as explained in the previous paragraph.
However, suppose that at some time {\em before} the $a_0 \to a_1$ transaction, there is a transaction $a_1 \to a_2$
that transfers funds from~$a_1$ to~$a_2$. 
We propose that none of the freeze obligations will pass on to $a_2$, 
because the $a_1 \to a_2$ transaction was posted before the disputed funds arrived at $a_1$.
The funds sent to $a_2$ are not directly involved in the dispute.
One can give examples where this policy can lead us astray,
but by in large, we claim that $a_2$ should not be involved in the freeze.

\medskip
These examples suggest that the freeze process is an iterative procedure
that freezes the maximum amount possible at every step.
If the \Rbalance at the current node is insufficient,
then the obligation is passed to the descendants of that node.
The process terminates once $s$ coins are frozen,
or once there are no more descendants to process. 

\medskip
We stress that the freezing algorithm in its entirety runs in a {\em single} transaction.
This ensures that account balances cannot change while the freeze process is executing.

\newcommand{\frozen}{\textit{toFreeze}}
\newcommand{\src}{\textit{src}}
\newcommand{\dest}{\textit{dest}}
\newcommand{\val}{\textit{val}}
\newcommand{\oblig}{\textit{oblig}}
\newcommand{\tim}{T}
\newcommand{\edges}{E}
\newcommand{\FT}{t_{\scriptscriptstyle \text{f}}}

\paragraph{Terminology.}
Suppose that the freeze function is called to freeze a transaction $t_0$
that transferred $s$~coins from address~$v$ to address~$a_0$.
Let $\FT$ be the posted freeze transaction on chain.
To describe the freeze algorithm, we use the following notation:
\begin{itemize}[itemsep=1ex,topsep=1ex,parsep=1ex]
\item 
$\frozen(a)$ is the number of coins that the freeze transaction $\FT$
will freeze at address $a$.  
At the start of the algorithm $\frozen(a) = 0$ for all~$a$, with the exception of $a_0$, where
\begin{equation} \label{eq:freeze}
  \frozen(a_0) \deq \min\bigl(s, \textit{Bal}(a_0)\bigr).
\end{equation}
The quantity $\textit{Bal}(a)$, for an address $a$, 
is the available \Rbalance at $a$ at the time of $\FT$.
This $\textit{Bal}(a)$ is calculated as the \Rbalance at $a$ at the beginning of the freeze transaction
minus the amount of coins already frozen at $a$
due to a prior dispute.
Thus~\eqref{eq:freeze} will freeze the maximum amount possible at $a_0$. 

\item For a transaction $t = (a \to b)$, from $a$ to $b$, 
let $\val(t)$ be the value transferred from $a$ to $b$.

\item $\textit{burnedAt}(a)$ 
is set to the number of coins burned at address $a$ from its \Rbalance
between the first transaction that sent a portion of the disputed funds 
to $a$ and the freeze transaction~$\FT$.
\end{itemize}

Now, consider the graph~$G$ that is defined by the set of transactions that took place after the disputed transaction
and before the freeze transaction.
The nodes in the graph are addresses, and every directed edge represents a transfer from one address to another.
The graph only includes an edge $b \to c$ if there is directed path of transactions from $a_0$ to address $b$ that all took place after $t_0$. 

\paragraph{The algorithm.}
We first describe a freeze algorithm that applies when the
graph~$G$ rooted at $a_0$ is a directed {\em acyclic} graph (DAG).   
In Appendix~\ref{sec:analysis} we extend the algorithm 
to handle cycles by introducing a pre-processing phase 
that eliminates cycles. 

The freeze algorithm is implemented 
in the function $\textit{CalcFreeze}$ shown in Figure~\ref{fig:code}.
This function is called as 
\[   \textit{CalcFreeze}(t_0)   \]
where $t_0$ is the disputed transaction. 

The algorithm begins by constructing a \href{https://en.wikipedia.org/wiki/Topological_sorting}{topological sort}
of the vertices of~$G$ staring at $a_0$.
A topological sort is a list $L$ of the vertices in $G$,
where every vertex in $G$ appears exactly once in $L$,
such that for every edge $(a \to b)$, the vertex $a$ appears in $L$ before $b$. 
Every DAG has a topological sort $L$,
and $L$ can be constructed in linear time in the number of edges.

For each address $a$ in $G$ the algorithm builds a value
$\oblig(a)$ that indicates the obligation amount that 
is passed to address $a$ and its descendants 
as a result of the disputed transaction.
The value of $\oblig(a)$ can increase whenever the algorithm
process an address that sent funds to $a$. 
For all $a$, the array $\oblig(a)$ is initially set to zero.

\begin{figure}[h!]
\begin{framed}
\noindent  $\textit{CalcFreeze}(t_0)$: 
\quad\quad /\!\,/ freeze trans. $t_0 {\scriptstyle = (v \to a_0)}$ \\[1ex]
\ \ $L \deq \text{(topological sort of the graph rooted at $a_0$)}$ \\
\mbox{} \qquad\qquad /\!\,/ assuming the graph $G$ is a DAG \\[1ex]
\ \ $\oblig(a_0) \deq \val(t_0)$ \\
\mbox{} \qquad\qquad /\!\,/ the obligation of $a_0$ due to $t_0$ \\[1ex]
for each $a$ in $L$ in order do:  \qquad /\!\,/ start at $a_0$

\begin{pseudocode}
\item  \label{code:total}
   $\tau \deq \oblig(a)$ \\
\mbox{}       \qquad /\!\,/ total obligation at $a$ from parents

\item  $\frozen(a) \deq \min\bigl( \tau,\ \textit{Bal}(a) \bigr)$  \\
\mbox{}       \qquad /\!\,/ amount to freeze at $a$

\item   \label{code:amount}
   $\tau' \deq \tau - \frozen(a) - \textit{burnedAt}(a)$   \\
\mbox{}    \qquad   /\!\,/ the amount left to freeze, but do not \\
\mbox{}    \qquad   /\!\,/ pass burned amount downstream

\item if $\tau' \le 0$: continue
\quad /\!\,/ all done with $a$

\item \label{code:edges}
   $\edges(a) \deq \text{sort}(\ \{ t {\scriptstyle = (a \to b)} \}\ )$ \\[0.5ex]
    \mbox{}\qquad   /\!\,/  the set of trans. from $a$ sorted \\
    \mbox{}\qquad   /\!\,/  in reverse chronological order
 
\item  \label{code:loop}
for each $t {\scriptstyle = (a \to b)}$ in $\edges(a)$ in order do: \\[0.5ex]
    \mbox{}\qquad\quad   /\!\,/  for each outgoing edge from $a$, \\
    \mbox{}\qquad\quad   /\!\,/  starting with the most recent one
    
\item  \label{code:min}
  \quad\  $\textit{ob} \deq \min\bigl( \tau', \val(t) \bigr)$

\item  \label{code:sub}
  \quad\ $\oblig(b)  \  \text{+=}\  \ \textit{ob} \quad;\quad  \tau' \ \text{{\bf --}\,=}\  \textit{ob}$ \\[.5ex]
      \mbox{}\qquad\quad   /\!\,/  obligate recipient $b$ to $\textit{ob}$ tokens
      
\item \label{code:out}
  \quad\ if $\tau' \leq 0$: break   \\
      \mbox{}\qquad\quad /\!\,/ all done with $a$,  \\
      \mbox{}\qquad\quad /\!\,/ terminate the inner loop on line~\eqref{code:loop}
\end{pseudocode}
\vspace{-1ex}
\end{framed}
\vspace{-3ex}
\caption{The freeze algorithm for a DAG}
\label{fig:code}
\end{figure}

\medskip
The algorithm attempts to freeze the maximal amount 
possible at every node $a$, starting with the root~$a_0$.
Line~\eqref{code:amount} calculates $\tau'$, 
the amount left to freeze after the available \Rbalance at $a$ is frozen.
Moreover, the algorithm subtracts the \Rbalance amount burned at~$a$ 
because that obligation should not transfer to the descendants of~$a$.
The remaining amount, $\tau'$, should pass as an obligation to the descendants of~$a$. 
In Line~\eqref{code:loop} the algorithm loops over all the transactions that take funds out of~$a$.
The loop proceeds in reverse chronological order, 
namely from the most recent transaction to the oldest transaction.
Then Line~\eqref{code:sub} obligates the recipient of the most recent transaction from $a$ to the maximal possible amount.
This continues until all of $\tau'$ is passed as an obligation to the children of~$a$.
We analyze this algorithm and its properties in Appendix~\ref{sec:analysis}.

Note that every address $a$ is only visited once,
and only after all its parents have been processed. 
Consequently, the running time is linear in the number of edges in the graph.
More precisely, if there are $V$ nodes and $E$ edges in the graph then the running time is $O(V + E)$
thanks to the data structure we use that keeps edges in a chronologically sorted order.

\medskip
Once this algorithm completes, the contract will add the quantity $\frozen(a)$ to the number of coins frozen at address $a$
(recall that $a$ might already have frozen coins due to a prior dispute).
This means that a subsequent transfer out of address $a$ will
fail if it causes the \Rbalance of $a$ to drop below the commulative frozen amount.

\newcommand{\spenditures}{\textit{\_spenditures}}

\paragraph{Implementing the algorithm.} 
The ERC-20R contract needs to maintain enough state to support the freeze process.
We introduce a new Solidity data structure called $\spenditures$
that serves two functions: 
(i) when asked to freeze a transaction, the contract needs to verify that the transaction took place within the dispute window, and
(ii) for an address~$a$, the contract needs to identify all downstream addresses that received funds from $a$ after the disputed
transaction took place. 

The new Solidity data structure called $\spenditures$ is illustrated in Figure~\ref{fig:spenditures}.

\begin{figure}[!h]
  \includegraphics[width=\linewidth]{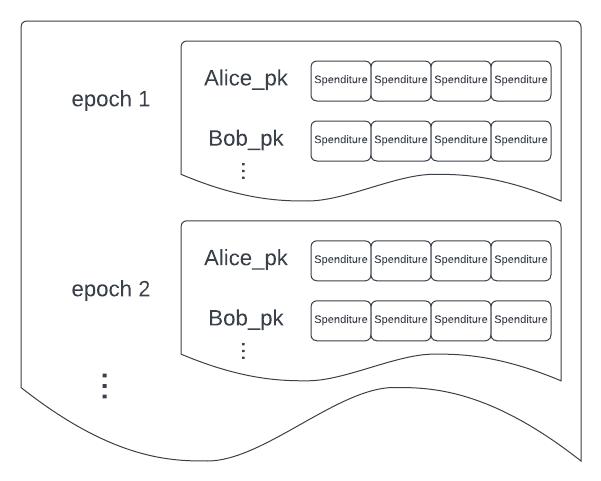}
  \caption{Spenditures nested map.}
  \label{fig:spenditures}
\end{figure}

We refer to each sequence of $\Delta$ blocks as an epoch 
(e.g., $\Delta = 1000$). 
For each epoch, the $\spenditures$ data structure lists all the transactions that took place during that epoch for each source address
(Alice and Bob in the figure).
Each Spenditure entry in Figure~\ref{fig:spenditures} contains the triple 
{\it (to, amount, blockNumber)}.
Every time the contract processes a transfer request,
it appends a Spenditure entry to the appropriate array
in the $\spenditures$ structure.
A burn transaction adds a Spenditure triple where the {\it to} field is null.

\medskip\noindent
The freeze function takes in arguments \\
\mbox{} \qquad {\it (epoch, from, index)}, \\
where {\it epoch} identifies the epoch in which the disputed transaction took place,
{\it from} identifies the payer, and 
{\it index} identifies the specific Spenditure in the data structure.
The {\it index} argument is provided by the governance contract by running an off-chain search over the $\spenditures$ structure to locate the
disputed transaction. 

The freeze function checks that the requested transaction is within the dispute window, and if so, runs the freeze algorithm using the $\spenditures$ data structure to determine the edges of the graph $G$ from Figure~\ref{fig:theftGraph}.
The final calculated freeze amounts per address are frozen.  
In addition, all these freeze amounts are recorded in a secondary array called {\tt \_claims} that is indexed by 
a newly generated 256-bit {\it ClaimID}.
If the victim prevails in trial, this array is used to transfer the correct amount from every suspect address to the victim.
The freeze function returns the {\it ClaimID} that indicates how to reverse the transaction.

\paragraph{The clean function.}
To minimize on-chain storage, our ERC-20R includes a {\tt clean} function that takes in an epoch and an array of addresses.
The function deletes all array entries in $\spenditures$ with the specified epoch and sender addresses 
for epochs for which the dispute window has elapsed. 
For every deleted entry, the specified $\spenditures$ amount
is transferred from the account's \Rbalance to its \NRbalance,
while making sure that the total \Rbalance does not drop below 
the frozen amount at the account. 
Anyone can call the clean function and free up on-chain storage that is no longer needed.

\subsection{Reversing a transaction}
\label{sec:revert}

Once the trial over a disputed transaction concludes, the governance contract will 
call the {\em rejectReverse} function or the {\em Reverse} function on the 
ERC-20R or ERC-721R contract.   

\begin{itemize}[topsep=1ex,itemsep=1ex,parsep=1ex]
\item An ERC-20R reversal takes as input a {\it ClaimID} which is an index into the {\tt \_claims} array.
The entry indicates the obligation of each suspect account in the disputed transfer.
The function transfers the specified amount from 
the \Rbalance of suspect accounts to the original owner, and clears the freezes.

\item An ERC-721R reversal takes as input {\it TokenID} and an {\it index} into {\tt \_owners} array.
This indicates the owner prior to the disputed transaction.  
The asset is then transferred to the original owner, and the asset is unfrozen.
\end{itemize}

\section{The Arbitration Process} \label{sec:governance}

We now turn to the governance process
where judges examine the evidence on a disputed transaction
and decide if the transaction should be revered. 
Some of the key issues in designing this process are:
(i) how judges are selected, 
(ii) how judges are compensated, and
(iii) how to discourage misbehaving judges, such as judges who take bribes or make bad decisions on disputed transactions.

This process is orchestrated by a governance contract.  
A single governance contract can govern many 20R and 721R contracts.
Judges vote on cases by either using an on chain or off chain voting system. 
Once enough votes on a case are collected, 
the governance contract calls the appropriate function on the 20R or 721R contract
to either execute the reversal or dismiss the case. 
To ensure that judges make independent decisions,
it is important that votes remain secret until sufficiently many votes are cast.
This could be done, for example, using some flavor of a commit-and-reveal voting scheme,
or any other semi-private voting scheme.

\subsection{Selecting judges}

We envision a large pool of available judges who will be compensated for their work. 
When a freeze request is submitted, the governance contract
selects a {\em random and unpredictable} quorum of $n$ judges from the pool.
The value of $n$ can be fixed, say $n=12$, or it can increase with the size of the 
disputed transaction, so that deciding on a large transaction requires more judges. 
This random selection is best implemented using a randomness beacon~\cite{beacon_chain}.
This quorum of judges decides on the initial freeze request, 
and later decides whether to approve or reject the reversal request.

Who can join the pool of judges and preside over cases?
One can envision a process that requires a real world identity, 
a requirement for qualifications, and a statement of conflicts,
much like real-world judge selection.
We will leave the details of how to admit applicants into the pool
of judges to future policy discussions of the exact mechanism.

\subsection{Compensating judges}

Every freeze request to the governance contract is accompanied by some stake from the party requesting the freeze.
This stake can be used to compensate judges for their work.
We state the following principle in deciding how to compensate judges:
When judges submit their vote on a freeze request or on a reversal decision, 
they are compensated for their work.   
However, {\em the compensation amount is independent of their voting decision.} 

If judges approve a freeze request, then the claimant's staked amount, minus fees, remains locked in the governance contract.
    If later the claimant loses the trial, the staked amount can compensate the defendant for their effort.
    If the claimant wins the trial, they could get back their excess stake. 
Importantly, if the judges reject the initial freeze request, 
then the claimant's staked amount minus the judges' fixed fees, should be burned.

\paragraph{Priority fees.}
While victims must provide a minimal stake along with a freeze request, they can optionally provide additional stake if they so choose.
This extra ``tip'' from the victim could potentially indicate the priority of the case. 
Judges could rule on cases by priority rather than by chronological order.
The tip could either be burned, or given to the prevailing party, or some combination of the two.

\subsection{Discouraging judicial misbehavior}

A judge might fail to vote in a timely manner, or they might frequently vote with the extreme minority,
potentially indicating an issue with their decision making.
In either case, the governance contract could remove such a judge from the pool.

A more interesting question is how to prevent bribery, 
where a party bribes judges to vote in its favor.
There are several technical solutions that could help partially mitigate this issue.
One approach is to use a secrecy-preserving process 
for selecting the quorum of judges to preside over a case:
a selected judge will learn that they were selected;
however, on their own, they will not know who the other judges are,
nor will they be able to prove to anyone that they were selected
(if they honestly follow the selection protocol).
Once all the selected judges vote, the set of judges is revealed 
along with a proof that the correct set voted. 
Importantly, the proof is revealed by the posted data from the selected judges,
not by a set of trustees. 
This way, a party who wishes to bribe a judge will not know who to bribe
because the set of judges is only revealed after they all voted. 
This makes it more difficult (but not impossible) to bribe judges.

\section{Discussion and Extensions} \label{sec:discussion}

\paragraph{Backwards compatibility.}
Our ERC-20R standard is backwards compatible with the existing DeFi infrastructure,
such as exchanges and lending protocols. 
First the {\em transfer()} function in ERC-20R has the same API
as the {\em transfer()} function in the ERC-20 standard. 
Hence, no software changes are needed to process a customer request.
Second, because the {\em transfer()} function transfers funds 
from Alice's \NRbalance 
(her balance of ``old'' tokens  that are no longer reversible),
the exchange is assured that the funds that Alice sends to it
will not be reversed due to a subsequent upstream reversal.
Of course, Alice herself might try to reverse the transaction that sent
Alice's tokens to the exchange, but as long as the exchange can show that 
it honestly processed the swap, that reversal request will be rejected,
and Alice will lose the fee needed to file a freeze request. 
Hence, the only change to the DeFi provider is that it may need to
respond to false reversal requests by Alice. 

When the exchange sends ERC-20R tokens to Alice,
it will call the {\em transfer()} function, as it does today.
This transfers funds from the exchange's \NRbalance.
Hence, the exchange will need to ensure 
that its \NRbalance is sufficiently high to support customer demand.  
This is done by calling the {\em clean()} function frequently,
to transfer maturing funds from the exchange's \Rbalance to it \NRbalance. 

Finally, some exchanges and lending protocols may be willing 
to take the risk and accept funds from Alice's \Rbalance,
most likely charging Alice a higher fee in the process.
These transactions will require greater scrutiny by the DeFi provider,
and will require additional software to properly assess the risk.


\paragraph{Reversibility is viral.}
An exchange that processes a swap of one 20R token for another 20R token is safe to do so without delay:
if one side of the transaction is later claimed to have come from stolen funds, 
the exchange can initiate a reversal on the other side of the transaction.
However, exchanging a 20R token for a non-reversible ERC-20 token is more dangerous.
The exchange might only approve such a swap after the dispute window has elapsed on the 20R side
of the transaction. 
This introduces a delay when swapping a reversible token for a non-reversible one.
Thus, once some key tokens become reversible, other tokens are incentivized to do the same
to eliminate this delay. 
In other words, reversibility is viral.

\paragraph{Automation.}
In addition to judges, one could also experiment with an algorithmic transaction monitor that factors into the final decision. A decision to freeze or reverse could then be a function of both the risk score given by the monitoring algorithm and the judges' votes. For example, if the algorithm determines a high likelihood that a transaction is a result of theft, then a reversal rejection would require a higher threshold of judges voting against the case. One could also use such an algorithm to broadcast alerts of  suspicious transactions, lowering the risk that a victim will miss the reversion window.

\paragraph{Appeals.}
Future work may consider an appeals process, in which the party who loses the trial can request another group of judges
to rule on the disputed transaction.   We do not explore this here.

\paragraph{Partial reversals.}
In some cases judges may opt for a partial reversal, where part of the funds go back to the owner,
and part stay with the recipient.  While we focused on complete reversals, the API can support
partial reversals.


\iftoggle{fullversion}{  
\paragraph{Grouping disputes.}
A transaction that sends coins from address $A_1$ to $A_2$ is often accompanied by a transaction from $A_2$ to $A_1$. 
For example, $A_1$ might send DAI to $A_2$, and in exchange $A_2$ sends an NFT to $A_1$. 
Suppose that both DAI and the NFT support our reversible standards.
Then, if $A_1$ asks to reverse the DAI transaction, $A_2$ will likely ask to reverse the NFT transaction.
As an optimization, both claims can be decided jointly by the same quorum of judges. 
}{}

\section{Conclusion} \label{sec:conclusion}

In this paper, we proposed extensions to ERC-20 and ERC-721 
that introduce a short time window when a transaction can be reversed. 
If adopted widely, these standards can protect the blockchain community from large financial losses.  
We hope that this paper will generate more discussion of reversibility 
as well as new designs to address the challenges that it raises.

\iftoggle{fullversion}{
\paragraph{\bf Acknowledgments.}
This work was supported by NSF, the Simons Foundation, NTT Research, Coinbase, and UBRI.
}{}

\appendix

\iftoggle{fullversion}{  
\section{Details of the freeze algorithm}
\label{sec:analysis}

In this section we extend and further analyze the freeze algorithm in Figure~\ref{fig:code}. 

\subsection{Properties of the algorithm}
\label{sec:opt}

First, let us examine some properties of the freeze algorithm in Figure~\ref{fig:code}. 
Consider the two transaction graphs in Figure~\ref{fig:two}.
Both show the flow of funds, starting with a disputed transaction from $v$ to $a_0$. 
Let us assume that when {\it freeze} is called, the balance at $a_0$ and $a_1$ is zero in both graphs.
As a result, in both graphs, calling {\it freeze} on the transaction from $v$ to $a_0$ 
will transfer the obligation to addresses $a_2$ and $a_3$, which is where the funds reside
at the time of the freeze. 

In the graph $G_1$ the transaction from $a_1$ to $a_3$ is the most recent,
and therefore the algorithm will freeze $10$ tokens at $a_3$ and zero tokens at $a_2$.
This may seem unfair because both $a_2$ and $a_3$ may have received a portion of the disputed funds.
We stress, however, that in this case there is no ``correct'' answer because it is not possible to determine 
how exactly $a_1$ split the 20 tokens at its disposal (10 from the disputed transaction and 10 from a prior balance). 
Our algorithm's decision is as ``correct'' as any other answer.  

Interestingly, an algorithm that tries to split the obligation evenly between $a_2$ and $a_3$ in the graph $G_1$ 
can lead to paradoxical results. 
Consider the graph~$G_2$.  
The algorithm needs to obligate $a_2$ and~$a_3$ to a total of 20 tokens.  
By the discussion in Section~\ref{sec:methodology} (Example~3), 
address $a_3$ bears the full responsibility for the {\em second} transaction from $a_0$ to $a_1$,
and will therefore be obligated to 10 tokens as a result of this transaction.
Now, if $a_3$ were to bear a portion of the responsibility for the {\em first} transaction from $a_0$ to $a_1$,
then the total obligation passed to $a_3$ would exceed 10 tokens. 
Since the balance at $a_3$ may only be equal to 10, 
this strategy will cause the algorithm to fail to freeze 20 tokens in total.
Consequently, splitting the obligation between $a_2$ and $a_3$ in such cases may lead
to freezing a reduced number of tokens, thereby preventing the victim from recovering the disputed funds. 

Our algorithm does not suffer from this issue:
Theorem~\ref{lem:amounts} below shows that our freeze algorithm in Figure~\ref{fig:code}
always freezes the correct amount: 10 tokens in the graph $G_1$ and 20 tokens in the graph~$G_2$.

\begin{figure}[t!]
\begin{center}
  \includegraphics[width=\linewidth]{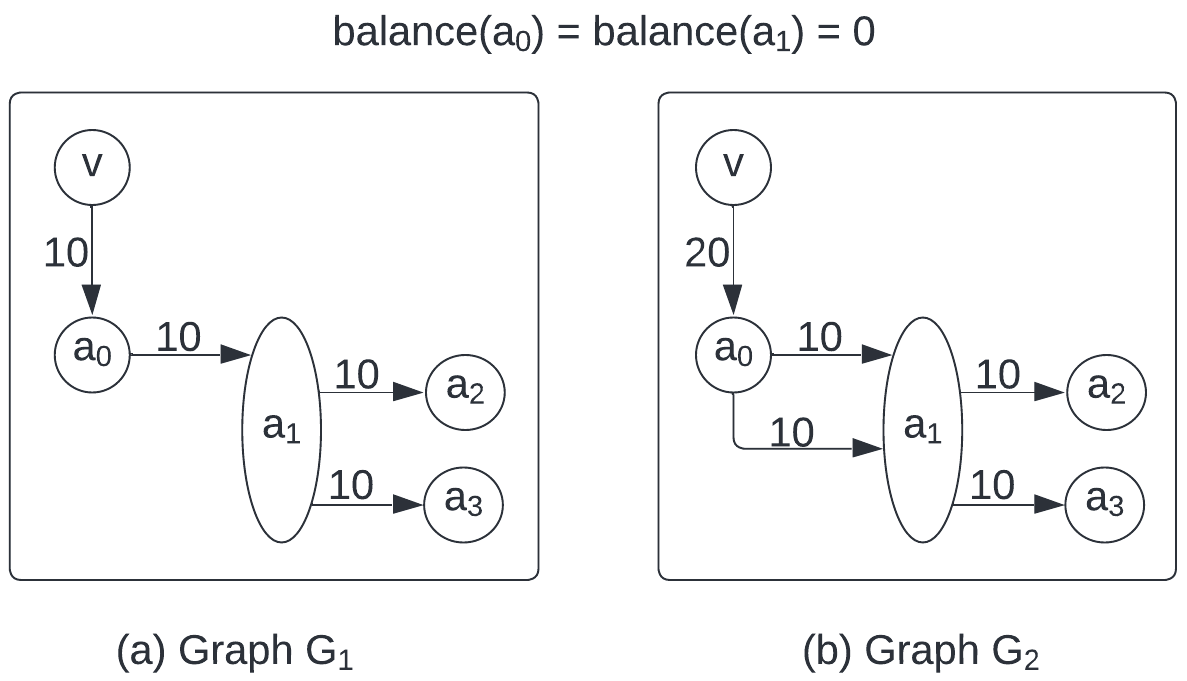}
\end{center} \vspace{-3ex}
  \caption{Two graphs where transactions are ordered chronologically from top to bottom.
     In both graphs, the balance at $a_0$ and $a_1$ is zero at the time of the freeze.
     In $G_1$ the starting balance at $a_1$ is 10.}
  \label{fig:two}
\rule{\linewidth}{1pt}
\end{figure}

\subsection{Graphs with cycles}
\label{sec:cycles}

The freeze algorithm in Figure~\ref{fig:code} relies on a topological sort
of the nodes in the graph~$G$.
The topological sort only exists when the graph is acyclic.  
Here we extend the algorithm to any directed graph, even one that includes cycles.
We use a pre-processing phase that eliminates cycles.  

Let us first see an example.
Suppose the graph contains a transaction $t = (a_0 \to a_1)$
for five coins, and a subsequent transaction $t' = (a_1 \to a_0)$
for three coins. 
This simple cycle means that $a_0$ sent five coins to $a_1$, and 
$a_1$ subsequently sent three coins back to $a_0$.  
For the purpose of our freeze algorithm, 
we can replace both transactions
by a single transaction from $a_0$ to $a_1$ for two coins, and eliminate
the cycle. 

More generally, let $t_0 = (v \to a_0)$ be the disputed transaction. 
The pre-processing step scans the graph 
rooted at $a_0$ to look for a directed cycle
of transactions $c_0, \ldots, c_{k-1}$ where $c_i = (b_i \to b_{i+1 \bmod k})$
that were all posted after the disputed transaction.
Let $c$ be the transaction of smallest value along the cycle,
breaking ties arbitrarily. 
Let $d \deq \val(c)$.
We can eliminate the cycle by removing transaction $c$ from the graph, 
and subtracting $d$ from the value of every other transaction in the cycle. 
The algorithm can repeat this process until the graph is free
of cycles.
At this point, the algorithm from Figure~\ref{fig:code} 
can be applied to the resulting graph.

\subsection{Two correctness theorems}

Finally, we argue that the algorithm will freeze the correct amount. 
The first theorem shows that the algorithm will freeze the correct total amount.
The second theorem shows that the algorithm will not over-freeze funds at any single address. 

\begin{theorem}  \label{lem:amounts}
Suppose that no burn transactions are processed between the disputed transaction $t_0$ and the freeze transaction $\FT$\,.
Moreover, there are no prior freezes in the system.
Then, if the disputed transaction $t_0$ sends $s$ tokens to address $a_0$,
then the freeze algorithm in Figure~\ref{fig:code} will freeze a total of exactly $s$ tokens.
In particular, when the algorithm terminates we have 
\begin{equation} \label{eq:lem2}
    s = \sum_{a} \frozen(a).
\end{equation}
\end{theorem}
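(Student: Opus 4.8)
The plan is to track a single conserved quantity — the total amount of ``disputed tokens in flight'' — and show that the algorithm never loses any of it until it is all frozen. Concretely, I would define, at any point during the execution, the potential
\[
  \Phi \deq \sum_{a \in L,\ \text{not yet processed}} \oblig(a) \ +\ \sum_{a \in L,\ \text{processed}} \frozen(a).
\]
The claim to prove by induction on the processing order of $L$ is that $\Phi$ is invariant, with initial value $\Phi = \oblig(a_0) = \val(t_0) = s$ (all other $\oblig$'s are zero and nothing is processed yet) and final value $\sum_a \frozen(a)$ (nothing left unprocessed). Granting the invariant, \eqref{eq:lem2} follows immediately, and it remains only to check that at termination $\oblig(a) = 0$ for every node that still has unprocessed descendants — i.e., that the inner loop never ``runs out of edges'' with $\tau' > 0$ — which is exactly the non-over-draining fact that lets the obligation keep flowing downstream.

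The inductive step is a local bookkeeping check at the node $a$ currently being processed. When $a$ is popped from $L$, it contributes $\oblig(a) = \tau$ to the ``unprocessed'' sum; after processing it contributes $\frozen(a) = \min(\tau, \textit{Bal}(a))$ to the ``processed'' sum, and it has pushed a total of $\sum_{t=(a\to b)} \textit{ob} = \tau - \frozen(a)$ (using the burn-free hypothesis, so $\textit{burnedAt}(a) = 0$ and $\tau' = \tau - \frozen(a)$ is exactly what lines \eqref{code:loop}--\eqref{code:out} distribute) onto the $\oblig(b)$ of its still-unprocessed children. So the net change to $\Phi$ is $-\tau + \frozen(a) + (\tau - \frozen(a)) = 0$, provided the inner loop actually distributes all of $\tau'$. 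Because every edge $b \to c$ in $G$ lies on a directed path from $a_0$, every child $b$ of $a$ appears after $a$ in the topological sort $L$, so these $\oblig(b)$ increments land in the ``unprocessed'' bucket, as needed.

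The step I expect to be the main obstacle is showing the inner loop never gets stuck with leftover obligation, i.e. $\sum_{t=(a\to b)} \val(t) \ge \tau'$ whenever $a$ is processed. Equivalently: the total value $a$ sent out (along edges of $G$, after $t_0$) is at least the obligation $a$ received minus what it could freeze. This is where the definitions of $\textit{Bal}(a)$ and of the graph $G$ do the work. In the burn-free, no-prior-freeze setting, $\textit{Bal}(a)$ is $a$'s full reversible balance at freeze time, which by conservation of tokens equals (inflow to $a$ after $t_0$) $-$ (outflow from $a$ after $t_0$) plus $a$'s pre-$t_0$ reversible balance; and one shows by a second induction along $L$ that $\oblig(a)$ never exceeds the inflow to $a$ along $G$-edges after $t_0$. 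Combining these two facts gives $\tau' = \oblig(a) - \frozen(a) \le (\text{inflow}) - \textit{Bal}(a) \le \text{outflow} = \sum_{t=(a\to b)}\val(t)$, closing the gap. A subtlety to handle carefully is that $\textit{Bal}(a)$ subtracts amounts already obligated by the algorithm at $a$ as it runs, so the two inductions (on $\oblig$ and on the ``stuck-ness'' claim) must be interleaved and carried out in the same topological order; I would state a single combined invariant over prefixes of $L$ to avoid circularity.
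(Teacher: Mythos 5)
Your proposal is correct and follows essentially the same route as the paper's proof: your potential-function invariant $\Phi$ is a repackaging of the paper's ``Property 1'' (each node passes exactly its excess obligation $\tau'$ to its children), and your key inequality $\tau' \le \sum_{t=(a\to b)}\val(t)$ is derived exactly as in the paper from the burn-free hypothesis together with the bound $\oblig(a) \le$ inflow (the paper's ``Property 2''). The only minor differences are that the paper also remarks that cycle elimination preserves the frozen total for non-DAG graphs, and that under the no-prior-freeze hypothesis $\textit{Bal}(a)$ is simply the reversible balance, so the interleaving subtlety you flag at the end does not actually arise.
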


\begin{proof}[Proof sketch]
First, let us assume that the directed graph rooted at $a_0$ is a DAG.
We argue that the algorithm satisfies two properties:
\begin{itemize}[itemsep=.5ex,topsep=.5ex,labelwidth=1em,leftmargin=1em]
\item {\em Property 1.}
When an address $a$ is processed in the main loop, 
let us call the quantity $\tau'$ computed on Line~\eqref{code:amount}
the {\bf excess obligation at $a$}, namely the obligation not covered by $a$'s current balance.
We claim that when the algorithm completes processing address~$a$, 
the quantity $\tau'$ is added to the total obligation of the children of~$a$. 

\item {\em Property 2.}
For a transaction $t: a \to b$, let $\textit{ob}(b,t)$ be the obligation that the algorithm passes to $b$ as a result of $t$.
We claim that for every transaction $t: a \to b$ we have
\[   \textit{ob}(b,t) \leq \val(t).  \]
This means that the funds that transaction $t$ sends to $b$ (and its descendants) is sufficient to cover the full obligation 
that the algorithm passes to $b$. 
\end{itemize}
Together these two properties prove the lemma: at every address, the exact excess obligation is passed to its children,
and there are sufficient funds at the descendants to cover these obligations. 

Let us prove these two properties.
Property~2 is immediate from  Line~\eqref{code:sub} in the algorithm.
This line ensures that for every transaction $t$, 
at most $\val(t)$ is added to the obligation at $b$. 

\begin{figure}[t!]
  \ \ \includegraphics[width=0.8\linewidth]{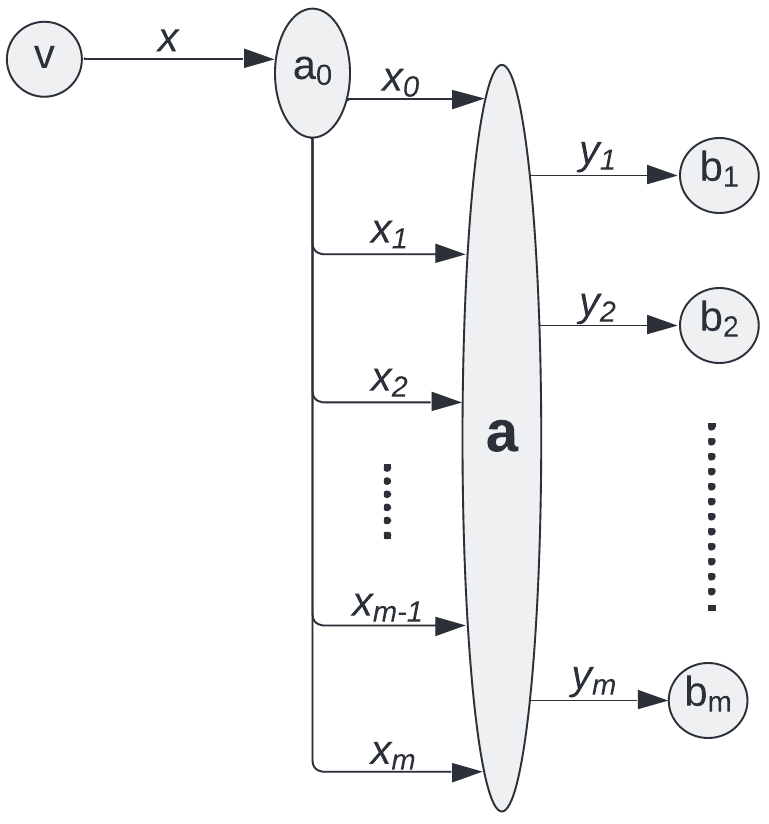} 
\vspace{-1ex}
  \caption{A sequence of $m+1$ incoming transactions to~$a$ from its parent $a_0$, 
           along with~$m$ outgoing transactions from $a$, interleaved in time.
           Incoming transaction number $i$ obligates $a_0$ to $x_i$ tokens, 
           and outgoing transaction number $j$ sends $y_j$ tokens to $b_j$.}
  \label{fig:quad}
\rule{\linewidth}{1pt}
\end{figure}

Property~1 follows from the fact that no coins are burned. 
In particular, consider the node $a$ in Figure~\ref{fig:quad}. 
The total obligation passed to $a$ is $\tau \deq x_0 + \ldots + x_m$. 
Let $\textit{Bal}(a)$ be the balance at $a$ just prior to the freeze transaction.
Then, by Property~2 and because no tokens were burned we know that
\[  
   \tau \le \textit{Bal}(a) + \underbrace{(y_1 + \ldots + y_m)}_{\text{\scriptsize amount that left $a$}}. 
\]
Moreover, by Line~\eqref{code:amount} we know that $\tau - \tau' = \textit{Bal}(a)$ and therefore
\begin{equation}  \label{eq:excess}
  \tau' \le y_1 + \ldots + y_m.  
\end{equation}
But when~\eqref{eq:excess} holds, the loop in Lines~\eqref{code:loop} to~\eqref{code:out} is guaranteed to add $\tau'$ 
to the total obligations of the children of $a$. 
This proves Property~1.

\medskip
To complete the proof we observe that the cycle elimination process from Section~\ref{sec:cycles}
does not affect the number of frozen tokens.
Hence, the lemma applies equally well to non-DAG graphs, assuming the cycle elimination process is applied first. 
\end{proof}

Next, we show that the algorithm will not over-obligate an address. 
This is where we rely on the reverse chronological order loop in Line~\eqref{code:loop}.
Consider again the transactions out of $a$ in Figure~\ref{fig:quad}.
We show that for every $j \in \{1,\ldots,m\}$, 
the address $b_j$ will bear no obligation for funds that arrived at $a$ after transaction number $j$.  
To define this, for a transaction $t: a \to b$ we use the following terminology:
\begin{itemize}[itemsep=.5ex,topsep=.5ex,labelwidth=1em,leftmargin=1em]
\item $\textit{ob}(b,t)$ is the obligation passed on to $b$ as a result of transaction $t$.
In other words, $\textit{ob}(b,t)$ is the value added to $\oblig(b)$ when the algorithm processes the transaction $t$.

\item $\textit{obsum}(a,t)$ is the sum of all the obligations that were added to $\oblig(a)$ as a result of the transactions
that sent funds to $a$ prior to $t$. 
For example, for a transaction $t_j: a \to b_j$ in Figure~\ref{fig:quad} we have 
\[  \textit{obsum}(a,t_j) = x_0+\ldots+x_{j-1}  \] 
\end{itemize}
\noindent
The following theorem shows that for a transaction $t:a \to b$ we have 
$\textit{ob}(b,t) \le \textit{obsum}(a,t)$.
This means that the obligation passed to $b$ will not exceed the obligations passed to $a$ due to transactions that preceded $t$.
In other words, $b$ will not bear responsibility for an amount transferred to $a$ after transaction~$t$.

\begin{theorem}  \label{thm:schedule}
Suppose that no burn transactions are processed between the disputed transaction $t_0$ and the freeze transaction $\FT$\,.
Then when the algorithm terminates, for every transaction $t: a \to b$ we have
\begin{equation} \label{eq:lem3}
    \textit{ob}(b,t) \le \textit{obsum}(a,t)
\end{equation}
\end{theorem}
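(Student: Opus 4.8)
The plan is to reduce the claim to a local statement at each node and then lean on two structural features of $\textit{CalcFreeze}$: its inner loop walks a node's outgoing edges in reverse chronological order (lines~\eqref{code:edges}--\eqref{code:out}), and, when nothing is burned, funds are conserved at every node. As in the proof of Theorem~\ref{lem:amounts}, I would establish the bound for the acyclic graph on which $\textit{CalcFreeze}$ actually runs --- the one produced by the cycle-elimination preprocessing of Section~\ref{sec:cycles} --- so I may assume throughout that the graph rooted at $a_0$ is a DAG. Since every transaction $t:a\to b$ is an outgoing edge of its source $a$, fix $a$ and use the notation of Figure~\ref{fig:quad}: let $x_0,x_1,\ldots$ be the obligations $a$'s incoming transactions contribute to $\oblig(a)$, listed chronologically, and let $y_1,\ldots,y_m$ be the values of $a$'s outgoing transactions $t_1,\ldots,t_m$, also chronological and interleaved so that $t_j$ falls between the $(j-1)$st and $j$th incoming transactions; then $\tau\deq\oblig(a)=\sum_i x_i$ and $\textit{obsum}(a,t_j)=x_0+\cdots+x_{j-1}$. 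Write $E\deq\tau-\frozen(a)$ for the excess on line~\eqref{code:amount} (with $\textit{burnedAt}(a)=0$) and $\tau'_j$ for the value of $\tau'$ just before $t_j$ is processed, so the loop visits $t_m,\ldots,t_1$ in that order.

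The first ingredient is that the reverse-chronological loop consumes $E$ greedily starting from the most recent edge: if $t_j$ is reached with $\tau'_j>0$, then every later outgoing transaction was obligated to its full value, $\textit{ob}(b_k,t_k)=y_k$ for all $k>j$, since line~\eqref{code:min} assigns $\min(\tau',\val(t))$ and line~\eqref{code:out} breaks the instant $\tau'$ reaches $0$, so a partially covered $t_k$ with $k>j$ would have stopped the loop before $t_j$. Hence $\tau'_j=E-\sum_{k>j}y_k$, and since $\textit{ob}(b_j,t_j)=\min(\tau'_j,y_j)\le\tau'_j$ (and equals $0\le\textit{obsum}(a,t_j)$ whenever $t_j$ is not reached or $\tau'_j\le0$), it remains to show $\tau'_j\le\textit{obsum}(a,t_j)$ when $\tau'_j>0$. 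Writing $\tau=\textit{obsum}(a,t_j)+Q_j$ with $Q_j\deq x_j+\cdots+x_m$, and using that in this case $\frozen(a)=\textit{Bal}(a)$ so $E=\tau-\textit{Bal}(a)$, the goal becomes the conservation inequality $Q_j\le\textit{Bal}(a)+\sum_{k>j}y_k$. The $\min$ on line~\eqref{code:min} bounds each incoming obligation by its transaction's value, so $Q_j$ is at most the total value entering $a$ after $t_j$; and since $a$ is reachable from $a_0$, every post-$t_0$ transfer out of $a$ is a $G$-edge, so the ones after $t_j$ are exactly $t_{j+1},\ldots,t_m$ --- funds entering $a$ after $t_j$ can only exit through them or stay in $a$'s balance. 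With nothing burned at $a$, conservation from just after $t_j$ to the freeze then gives (value in after $t_j$) $\le\textit{Bal}(a)+\sum_{k>j}y_k$, which finishes the DAG case.

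I expect this conservation step to be the crux, because $\textit{Bal}(a)$ must be tracked carefully. Any balance $a$ held before its first incoming $G$-edge only helps, and incoming transfers to $a$ that are not $G$-edges only raise its true balance, so both are harmless; but coins at $a$ that a prior, unrelated dispute has frozen are subtracted from $\textit{Bal}(a)$, and closing that gap seems to require those coins to have already been present in $a$'s balance by the time of each outgoing transaction $t_j$. Without this, it appears one can build a scenario --- a prior balance at $a$, then a subsequent incoming transfer, then a still-later prior freeze --- in which $\textit{ob}(b_j,t_j)>\textit{obsum}(a,t_j)$, so I would in fact prove the theorem under the hypotheses of Theorem~\ref{lem:amounts} (no burns \emph{and} no prior freezes) rather than no burns alone. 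A minor additional point is to fix a tie-breaking order for transactions that share a block, consistent with the sort on line~\eqref{code:edges}, so that ``before $t_j$'' and ``after $t_j$'' genuinely partition $a$'s incoming transfers.
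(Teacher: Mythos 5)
Your proof is essentially the paper's own argument: reduce to the local picture of Figure~\ref{fig:quad}, use the reverse-chronological greedy loop to get $\textit{ob}(b_j,t_j)=\min\bigl(y_j,\ \tau-\textit{Bal}(a)-\sum_{i>j}y_i\bigr)$ in the nontrivial case, and close with the no-burn conservation inequality $\sum_{i\ge j}x_i\le \textit{Bal}(a)+\sum_{i>j}y_i$. Your one substantive addition is correct and worth keeping: since $\textit{Bal}(a)$ is the \Rbalance net of amounts frozen by prior disputes, that conservation inequality can fail when such a prior freeze exists, so the theorem needs the ``no prior freezes'' hypothesis of Theorem~\ref{lem:amounts}, which the paper's proof uses implicitly but its statement omits.
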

\begin{proof}
Let us prove the theorem for the graph in Figure~\ref{fig:quad}.
The exact same argument applies to other graphs. 
Let $\textit{Bal}(a)$ be the balance at address $a$ at the time of the freeze.
As usual, let
\begin{equation} \label{eq:tau}
  \tau = x_0 + \ldots + x_m.   
\end{equation}
Now, fix some $j \in \{1,\ldots,m\}$ and let $t_j$ be the transaction $a \to b_j$ in Figure~\ref{fig:quad}.
Observe that if
\[  \tau \le \textit{Bal}(a) + \sum_{i=j+1}^m y_i \]
then $\textit{ob}(b_j,t_j) = 0$ and the theorem follows trivially. 
Hence, we can assume that
\[  \tau > \textit{Bal}(a) + \sum_{i=j+1}^m y_i. \]
In this case, 
by definition of the loop in Lines~\eqref{code:loop} to~\eqref{code:out} in the algorithm we know that
\begin{multline} \label{eq:full}
   \textit{ob}(b_j,t_j) = \\
       \min\left( y_j,\ \ \  \tau - \textit{Bal}(a) - \sum_{i=j+1}^m y_i \right). 
\end{multline}
Since no coins were burned at $a$ we know that
\[  \sum_{i=j}^m x_i \le \textit{Bal}(a) + \sum_{i=j+1}^m y_i   \]
and therefore by~\eqref{eq:full} we obtain
\[   \textit{ob}(b_j,t_j) \le \min\bigl( y_j,\  \tau - \sum_{i=j}^m x_j  \bigr).  \]
It now follow by~\eqref{eq:tau} that
\[  \textit{ob}(b_j,t_j)  \le x_0 + \ldots + x_{j-1} = \textit{obsum}(a,t_j)   \]
as required.  This completes the proof.
\end{proof}

\subsection{Preventing double freezes}
\label{sec:double}

Consider two transactions $v \to a_0$ and $a_0 \to a_1$
issued in that order, each transferring~$\tau$ tokens.
Suppose, that the final balance at $a_0$ is zero and at $a_1$ is $2 \tau$. 
Next, suppose $v$ requests to freeze the $v \to a_0$ transaction,
which will freeze~$\tau$ tokens at $a_1$.
The problem is that now $a_0$ can request to freeze the $a_0 \to a_1$ transaction.
This second request is attempting to again freeze the $\tau$ tokens from~$v$,
and should do nothing.
However, our freeze algorithm, as currently described, 
will freeze an additional $\tau$ tokens at $a_1$. 

We briefly describe two ways to prevent double freezes. 
First, the judges can recognize the second freeze
as a double freeze and reject this freeze request. 

Alternatively, there is a simple modification to the 
freeze algorithm in Figure~\ref{fig:code}
that eliminates double freezes. 
Recall that the freeze algorithm in Figure~\ref{fig:code}
processes one transaction at a time in the transaction graph.
Every transaction is recorded in the 
$\spenditures$ data structure shown in Figure~\ref{fig:spenditures}.
Consider a transaction $a \to b$, and let 
$S = (\textit{to}, \textit{amount}, \textit{blockNumber})$ 
be the corresponding triple in the $\spenditures$ data structure.
If the freeze algorithm obligates $b$ to $\tau_0 \le \tau$ tokens
due to this transaction, then the algorithm does two more steps:
first, it subtracts $\tau_0$ from the {\it amount} field in the triple $S$,
and second, it adds the triple $(a, \textit{index}, \tau_0)$
to the {\tt \_claims} array for this freeze request. 
The first step ensures that funds cannot be double frozen.
The second step is needed for the {\it rejectReverse()} function.
If the reversal is eventually rejected, then the {\it rejectReverse()}
function will add $\tau_0$ 
back to the triple $S$ in the $\spenditures$ data structure
from which it was subtracted.

}{}

\bibliography{whitepaper}
\bibliographystyle{plain}

\end{document}